\newtheorem{definition}{Definition}
\newtheorem{theorem}{Theorem}
\begin{document}

\title{Some Optimization Solutions for Relief Distribution}

\author{Jhoirene  Clemente}
\email{jbclemente@up.edu.ph} 
\author{Jessie James Suarez}
\email{ jpsuarez@up.edu.ph}
\orcid{1234-5678-9012}

\affiliation{%
  \institution{Department of Computer Science\\
  University of the Philippines Diliman}
\country{Philippines}
  \postcode{1101}
}

\author{Olivia Demetria}
\email{olivia.demetria@maroonstudios.com}

\author {Perry Go}
\email{perry.go@maroonstudios.com.ph}

\affiliation{%
  \institution{Maroon Studios Inc.\\}
\country{Philippines}
  \postcode{1101}
}
\author{ Dylan Salcedo }
\email{dylan.salcedo@maroonstudios.com}

\affiliation{%
  \institution{Maroon Studios Inc.\\}
\country{Philippines}
  \postcode{1101}
}


\begin{abstract} Humanitarian logistics remain a challenging area of application for operations research. In relief distribution, the main goal is to deliver all the supplies to those that are in need in the fastest way possible. In this paper, we present different optimization solutions for relief distribution. We present a formalization of the three main problems in the humanitarian logistics aspect of relief distribution.  We identify the optimal location of the distribution centers.  We match the number of supplies to the number of demands for each distribution center based on the distribution of demands.  We provide the assignment of tasks to delivery fleet according to the location and the road network of the region. For each delivery truck, we provide an optimal sequence of visits to pre-assigned distribution centers. 

\end{abstract}

%

\keywords{relief distribution,  combinatorial optimization,  graph problems }

\maketitle
\section{Introduction}
In the last 10 years, the United Nations  and the World Bank  estimates that, disasters, both natural and man-made, 
have killed more than 700 million people,  have affected and displaced more than 1.7 billion lives, and have wiped out 1.4 trillion US Dollars in assets and livelihood. Governments all over the world have been investing in constant disaster preparation and remediation efforts.  In the United States alone, Federal Emergency Management Agency (FEMA) gets roughly 13 billion US dollars in annual budget.  For developing countries such as the Philippines,  local and national leaders are mandated to allocate a minimum of $5\%$ of the annual budget to disaster risk reduction efforts. 

In this study, we present a computational solution to create an overall logistics plan for the distribution of relief goods using graph analysis and optimization applied to the road network of the region. The logistics plan includes the optimal number and location relief warehouses, the resource allocation based on the demand, as well as routes taken by the delivery trucks during deployment. As a case study, we used the road network and population distribution of the Province of Isabela.


\section{Data Set}

In this study, we made use of the road network extracted from the open street map data of the province. We also made use of the population distribution per municipality from the published data of the National Statistics Office. 

\subsection{Graph Representation of the Region}
Computationally, we model the road network using a complete edge-weighted graph $G =(V, E)$ with the set of vertices  $V$ including all the 37 municipalities of the province. For each vertex $i \in V$, we have $p_i\in \mathcal{Z}^+$ to denote the population of municipality $i$. For each municipality, we take into account the location by taking a representative point denoted by a a decimal pair $(x_i, y_i)$ from the global coordinate system. The $x_i$ and $y_i$ represents the latitude and longitude of the location respectively.  Initially, although this location can be arbitrary point inside a political region, this point can be an actual location of an evacuation center with a capacity of  $p_i$. For simplicity, we assume the ideal case that there is one evacuation center per municipality, and each vertex in the graph is a representation of the evacuation center.  However, the representation is not limited to this specific case and in fact it can also handle the case of multiple evacuation centers per municipality.  

An edge in a graph represents a path from one evacuation center to another. The edge weights are computed by getting the shortest distance between the two evacuation centers given the actual road network of the region.

The resulting complete edge weighted graph is the input to several computational problems which arise from creating the actual logistics plan. Let us discuss in detail the different information that is necessary to carry out a relief distribution effort. 

\section{Logistic Solutions}
In the area of Humanitarian Logistics (HL), disaster risk reduction activities can be divided into two stages: a {\it pre-disaster stage} and a {\it post-disaster stage}. In the pre-disaster stage, activities related to mitigation and preparation are involved. These includes the evacuation of people from disaster-stricken areas to safe places and planning the actual flow and storage of goods and materials from the point of origin to the point of consumption.  On the other hand, the post-disaster stage involves activities related to response and recovery 

In this study, we focus on the pre-disaster stage where we provide a logistics plan for the actual distribution of relief goods. In the logistic plan, we have the following  assumptions
\begin{itemize}
\item Residents in danger were already transferred to evacuation centers. 
\item If residents are safe within the municipality, i.e., no evacuation is necessary, every resident should be reachable by all the distributors within the municipality.
\item We do not take into account the limited capacity of delivery trucks.
\end{itemize}
In this study we focus on providing answers for creating a logistics plan that includes the following. 
\begin{enumerate}
\item The optimal number and location relief warehouses such that every evacuation center is reachable in the minimum amount of time
\item The resource allocation of relief goods to each of the chosen relief warehouses
\item Given an arbitrary number of relief trucks, we provide a division of assigned delivery locations oblivious of the political boundaries of the region
\item The tour taken by each delivery truck ensures the fastest turn around time
\end{enumerate}
Each of the identified information is discussed in each of the following  subsections.

\subsection{Facility Location} \label{facility_location}
As part of pre-disaster preparations, we seek to identify the location of the relief warehouses or distribution centers that can supply the demand of all the neighboring evacuation centers in the minimum amount of time. Our approach is to use the   facility location  to formally model the problem. 

The facility location problem is used to select optimal  location of shelters, distribution centers, warehouses, and medical centers subject to available input parameters, such as the number of affected population and location/capacity of candidate facilities.  Given that we have an unlimited amount of resources such as delivery trucks and man power, we can ideally deploy one truck for each evacuation center to ensure that supplies can reach the evacuations faster. However, this is often not the case during emergency operations, thus selecting an optimal number of drop points is necessary. In this paper, we solve the following  version of the facility location problem.
\begin{definition}[Relief Warehouse Selection Problem] \label{facloc}
Given a complete edge-weighted graph $G = (V,E,d)$ with a set of candidate relief warehouses  $V = \{1, 2, \ldots, n\}$, and $d: V x V  \rightarrow R^+$, where $d(i,j)$ is the shortest road distance between every pair of $i$ and $j$,  we seek to identify a subset of selected warehouses  $W \subseteq V$, such that 
\begin{enumerate}
\item for all $i \in V \setminus W $, node $i$ has at least one edge incident to $W$, and 
\item the total number of relief warehouses  selected $|W|$  and the cost of servicing the demand points, i.e., 
$$cost(W, G) = \sum_{i \in V \setminus W} \min_{\forall j \in W} d(i,j),$$ is minimum.
\end{enumerate}

\end{definition}
The first condition in Definition \ref{facloc} ensures that every candidate relief warehouse that is not in $W$ is connected to at least one vertex in $W$. Simply, every evacuation center is near a relief warehouse. Moreover, satisfying condition two in Definition \ref{facloc} ensures minimal resources in setting up a warehouse and minimal resources in servicing the remaining demand points. 

\begin{algorithm} \label{optima_facloc}
\begin{verbatim}
T  = MST(G)
W = MVC(T)
Return W
\end{verbatim}
\caption{Polynomial-time algorithm for Relief Warehouse Selection Problem}
\end{algorithm}

Let ALG be an algorithm for the relief warehouse selection problem. ALG  is composed of two stages. The first stage produces a minimum spanning tree  $T$  from $G$. The second stage produces a subset of vertices $W$ from the minimum vertex cover of $T$. 

\begin{theorem}
ALG produces an optimal solution for the relief warehouse selection problem in polynomial-time.
\end{theorem}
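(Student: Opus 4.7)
The plan is to separate the claim into two independent pieces: the polynomial runtime, and the optimality of the output $W$. The runtime part is routine. Computing the minimum spanning tree $T$ of $G$ takes $O(|E|\log|V|)$ via Prim's or Kruskal's algorithm, and computing a minimum vertex cover on a tree is solvable in linear time, either by iteratively removing a leaf and committing its parent to the cover, or by a standard bottom-up dynamic program over an arbitrary rooting of $T$. Composing two polynomial routines gives a polynomial overall algorithm.

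For optimality, my first step would be to verify feasibility. Because $T$ is a spanning tree of $G$, every vertex $i\in V\setminus W$ is incident to at least one edge of $T$; since $W$ is a vertex cover of $T$, the other endpoint of any such tree edge must lie in $W$. Hence $i$ has at least one edge (the tree edge itself, which is also an edge of the complete graph $G$) incident to $W$, fulfilling condition (1) of Definition \ref{facloc}. This also gives a convenient upper bound: for each $i\notin W$, the quantity $\min_{j\in W} d(i,j)$ is bounded above by the weight of a tree edge from $i$ to its MVC neighbor, so $\mathrm{cost}(W,G)$ is at most a sum of MST edge weights.

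The second and harder step is to show that no feasible competitor $W^{*}$ can strictly beat the combined objective $|W|+\mathrm{cost}(W,G)$. The plan is a two-term comparison. For the cardinality term, I would argue that any feasible $W^{*}$ induces a vertex cover of $T$: each $i\notin W^{*}$ has a $G$-neighbor in $W^{*}$, and a routing/charging argument on the MST lets us attribute at least one tree edge to a vertex of $W^{*}$; minimality of the MVC on $T$ then forces $|W^{*}|\ge |W|$. For the service-cost term, I would use the defining property of the MST (it minimizes the total weight of any spanning connector) to argue that the tree edges used to bound $\mathrm{cost}(W,G)$ have total weight no larger than the shortest $G$-paths used by $W^{*}$.

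The main obstacle I foresee is precisely this converse direction: the two summands $|W|$ and $\mathrm{cost}(W,G)$ trade off against each other, so an isolated appeal to MVC optimality or to MST optimality is not sufficient on its own. A competitor could, in principle, pay for extra warehouses in order to save on service distance, or vice versa. To close this gap cleanly, I would try to exhibit an exchange argument that transforms any feasible $W^{*}$ into a vertex cover of $T$ without increasing either term, so that both inequalities can be applied in sequence. If a direct exchange proves elusive, I would fall back to a charging scheme that partitions the objective of $W^{*}$ over MST edges and compares it, edge by edge, to the corresponding contribution from $W$.
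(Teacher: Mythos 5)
Your runtime and feasibility arguments are sound and coincide with the paper's: the MST and tree-MVC computations are polynomial, and since $W$ covers every edge of the spanning tree $T$, each $i \in V \setminus W$ has a tree edge whose other endpoint lies in $W$, which gives condition (1) of Definition \ref{facloc}. The genuine gap is exactly where you flag it: the converse optimality direction is left as a plan, and the plan as stated cannot be carried out. Your key step is to show that any feasible competitor $W^{*}$ induces a vertex cover of $T$, so that $|W^{*}| \ge |W|$. But $G$ is \emph{complete}, so condition (1) is satisfied by every nonempty $W^{*}$ --- in particular by a single vertex, which is not a vertex cover of $T$ once $T$ has more than one edge. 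Such a singleton has $|W^{*}| = 1 \le |W|$ while its service cost may be larger or smaller than $cost(W,G)$ depending on the metric; the two terms of the objective genuinely trade off, and neither the MVC-minimality of $W$ on $T$ nor the MST-minimality of $T$ in $G$ resolves that comparison. No exchange or charging argument of the kind you sketch can close this without an additional hypothesis relating the cost of opening a warehouse to travel distance.

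For what it is worth, the paper's own proof does not supply the missing step either: it asserts that $cost(W,G) = \sum_{i,j \in T} d(i,j)$ and that this is minimum over connected subgraphs, but the service cost is a sum of one nearest-warehouse distance per vertex outside $W$, not the total weight of $T$, and minimality over spanning subgraphs says nothing about competitors $W^{*}$ that are not derived from any spanning structure. So your instinct that the bicriteria objective is the crux is correct; before any proof (yours or the paper's) can go through, condition (2) of Definition \ref{facloc} needs a precise meaning of ``minimum'' (lexicographic, weighted sum, or Pareto) or a restriction on the feasible set.
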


\begin{proof}
We need to show that the selected set of nodes $W$ from our solution satisfies the two conditions from Definition \ref{facloc} and that solution $W$ is optimal.
 
The resulting  minimum spanning tree $T$ from the first stage of ALG produces a connected subgraph of $G$.   The tree property ensures that there exists a simple path between every pair of vertices.  The selected warehouses is $W$ which is obtained by getting the minimum vertex cover of $T$.  Since,  the vertex cover ensures that every edge is incident to $W$, every vertex not in $W$ is incident to at least one vertex in $W$. Thus, satisfying condition 1 of Definition \ref{facloc}.  

To show that condition 2 is met,  we have the following proof.  By definition of the minimum spanning tree, $T$ is consists of the minimum weight edges  in $G$.  Since every vertex in $G$  is still connected to at least  one vertex in $W$ using the minimum spanning tree $T$,  then $cost(W,G) = \sum_{\forall i,j \in T} d(i,j),$,  which is minimum for all possible connected subgraph in $G$. 

Lastly,  ALG runs in polynomial-time because each stage runs in polynomial-time.  First,  computing for the minimum spanning tree has an $O(n\log{n}) $ solution and getting the minimum vertex cover in trees has a polynomial-time solution using the maximal matching as subroutine. 
\end{proof}
\subsection{Resource Allocation} \label{resource_allocation}
As the name implies, resource allocation seeks to answer the total number of supplies to deliver for each selected warehouse. We assume that the number of supplies from the supply port, i.e., the  source of all relief goods is not less than the required supply for the whole population.  Based on the minimum spanning tree obtained from the facility location algorithm,  we computed a resource allocation that takes into account the demand of each evacuation centers.

Our approach uses the solution of the selected relief warehouse problem. The computation of each resource allocation which follows an iterative procedure is shown below.

\begin{algorithm}
\begin{verbatim}
T = MST(G)
W = MVC(T)
for each i in  V\W:
        W_i =  the set of warehouse incident to i in T 
       for each j in W_i: 
                r_j = p_j + p_i/|W_i|
        r_i = 0
return   R = {r_1, ... r_n}
\end{verbatim}
\caption{Iterative algorithm to compute for the resource allocation for every selected warehouse.}

\end{algorithm}
 
At the end of the algorithm every non relief warehouse $i \in V \setminus W$ has a resource allocation  $r_i = 0$ and every selected warehouse $j \in W$ satisfies  $$\sum_{\forall j \in W } r_j = \sum_{\forall i \in V} p_i.$$

\subsection{Delivery Assignments} \label{Delivery Assignments}
Delivery assignments are given in a manner following a division that favors efficiency over established political boundaries. Using  K-means clustering, each truck is assigned a division/cluster such that the distribution centers it would visit would be as close together as possible. This also prevents overlap in delivery assignments, and maximizes truck trips.

K-means clustering, outlined in \cite{hartigan1975clustering} is an unsupervised classification technique that splits a set of points into a defined number of groups such that points in a each group are as close together as possible. It is an iterative algorithm that starts with placing a defined number of markers on the plane. Each existing point will be assigned a marker that is closest to it (in this case, we use the Euclidean distance), with points connected to the same marker labeled as a single cluster. The centroid of each cluster is then computed and markers are relocated to those locations. Points are then assigned markers that are closest to them, and the process repeats until any succeeding iteration won't result in any cluster change. \\

\begin{figure}
\centering
\includegraphics[width=0.47\textwidth]{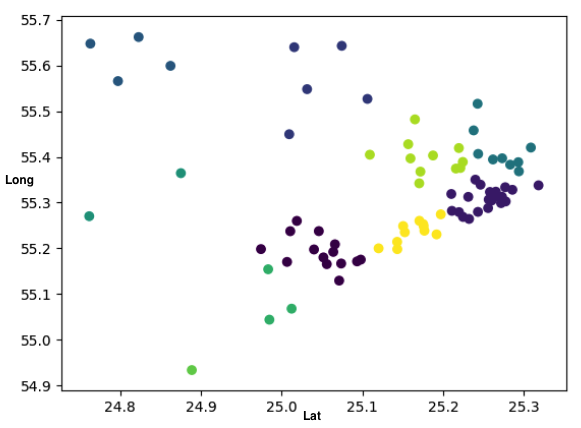}
\caption{Classification of truck delivery assignments through K-means clustering using $K=8$.  Each colored group represents a division to be traveled by a single truck. The example scatterplotis obtained by using the decimal cordinates of towns and villages of Dubai.  The x and y coordinates correspond to the Latitude and Longhitude,  respectively. }
\label{truck_assignment_using_clustering}
\end{figure}

Figure \ref{truck_assignment_using_clustering} shows the implementation of K-means clustering on the evacuation centers, with $K$ as the number of trucks available for assignment (user-defined), and the nodes present as the locations of distribution centers. Although each cluster has a varied number of assigned distribution centers, the total distance traveled by each truck remains roughly the same.

\subsection{Route Optimization} \label{route_optimization}

Optimization of disaster response logistics tasks, includes:
the optimal routing of a fleet of trucks to transport high priority humanitarian distribution over an unreliable road network, ensuring the fastest and safest routes in the  delivery of supplies.  \\

Once a cluster has been defined, an optimal route is to be instructed to them, determined by solving a Traveling Salesman Problem (TSP). TSP involves solving an NP-hard problem of an agent visiting every node in the vicinity exactly once then returning to its initial position. It aims to solve for the optimal route such that the travel time of the agent is minimized. \\

In this study, TSP will be solved using two algorithms: the two-approximation algortihm with Network X, and the TSP routing model with Google OR Tools. The performance of each algorithm will then be compared with the benchmark datasets provided by TSPLib \cite{tsplib}.

\subsubsection{Two-approximation algorithm with Network X} \label{Network X}
This algortihm uses a graphical approach to solving the TSP, and has three phases. First, a minimum spanning tree is taken from a complete, undirected graph, to ensure that every vertex is visited. Next, a pre-order DFS traversal is done to the spanning tree. Finally, with the generated route from the second phase, nodes with multiple traversals are bypassed. 

Bypassing repeated nodes in the route allows another edge to be created from the nodes connected to it. With the triangle inequality, this edge is less than or equal to the previous edges, thereby lowering the total distance. With the two-approximation algorithm, the produced path is less than twice the cost of the optimum path. The proof of the two-approximation algorithm is outlined in \cite{twoapprox} and is implemented through the NetworkX package.

\subsubsection{Routing Model with Google OR Tools} \label{Google OR Tools}
 Google OR Tools uses a routing model to solve the TSP. It is fed with an $n x n$ input matrix $M$ with $M_{ij}$ being the distance from the $i^{th}$ to the $j^{th}$ node. The algorithm is outlined in \cite{googleOR} and has several options that modify the number of times a node is visited, if a certain node can only be visited at a certain time, and if a return trip has a different time span, making this algorithm very versatile. It uses a C++ method to solve the TSP which is fed by a callback function containing the distances between nodes.

\section{Dissemination and Operation}
Instructions are disseminated to dispatched trucks through the client portal, so they can follow the pre-computed optimal route to minimize the total cost of visiting all the designated relief warehouses and dropping the necessary resources to the assigned evacuation centers.
The centralized monitoring and configuration of the relief distribution plan via a command center dashboard where:
the admin personnel can update the plan according to the total number of available delivery trucks, and
the admin personnel can update the plan with new supply distribution center locations and evacuation center locations as needed.
The estimation of storage needs and capacities for each relief distribution center based on actual demand, and
a messaging facility for real-time communication between the admin personnel and dispatchers via broadcast and feedback features.

\subsection{System Architecture}
\begin{figure}[H]
\includegraphics[width=0.47\textwidth]{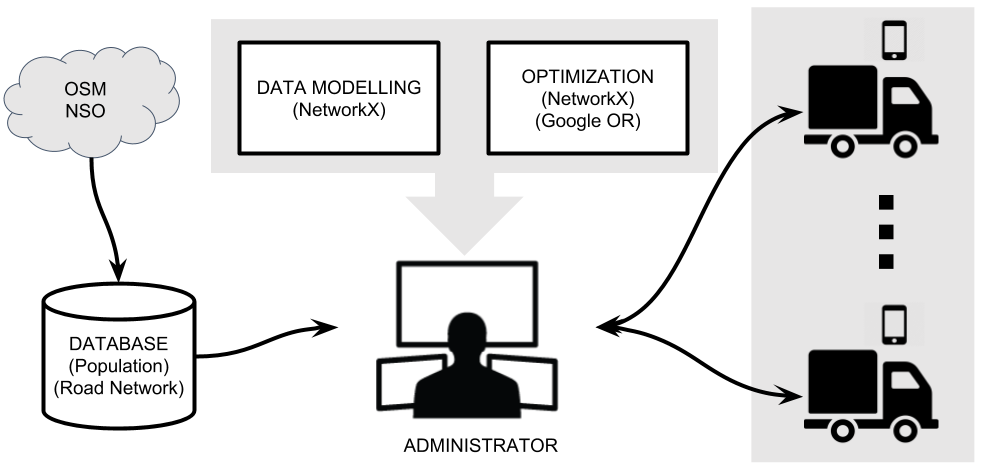}
\caption{Optima: Relief Distribution System Architecture}
\label{img:sys_arch}
\end{figure}

\section{Results and Analysis} \label{results_analysis}
\subsection{Route Optimization Analysis} \label{RouteOptAnalysis}
The algorithms were implemented in Python 3.6 using a 64-bit Win10 OS, Intel Core i7-8550U 1.80 GHz 8th Gen CPU with 8GB of RAM. Benchmark datasets with different node populations were taken from TSPlib, particularly-: 5 nodes, 17 nodes (gr17), 26 nodes (fri26), 48 nodes (att48), 100 nodes (kroa100), and 200 nodes (kroa200). \\

To test the performance of the algorithms outlined in \ref{route_optimization}, the cost of their proposed routes are compared to the benchmark optimal routes. The total Running time to solve each route is also monitored.

\begin{table}[]
\centering
\caption{Total running time and route cost comparison of Routing Model and Two-approximation algorithms to optimal solutions}
\label{TSPcomptable}
\begin{tabular}{cccccc}
\hline
\multirow{2}{*}{\begin{tabular}[c]{@{}c@{}}\# of \\ nodes\end{tabular}} & \multirow{2}{*}{\begin{tabular}[c]{@{}c@{}}Optimal \\ Cost\end{tabular}} & \multicolumn{2}{c}{Cost} & \multicolumn{2}{c}{Running time (s)} \\ \cline{3-6} 
 &  & \begin{tabular}[c]{@{}c@{}}Routing\\ Model\end{tabular} & \begin{tabular}[c]{@{}c@{}}Two-\\ approx\end{tabular} & \begin{tabular}[c]{@{}c@{}}Routing\\ Model\end{tabular} & \begin{tabular}[c]{@{}c@{}}Two-\\ approx\end{tabular} \\ \hline
5 & 15000 & 15000 & 30000 & 0.049 & 0.026 \\
17 & 2085 & 2085 & 2352 & 0.059 & 0.026 \\
26 & 937 & 953 & 1112 & 0.073 & 0.039 \\
48 & 33551 & 34160 & 43974 & 0.215 & 0.031 \\
100 & 21282 & 21923 & 27211.68 & 0.326 & 0.093 \\
200 & 29368 & 29188 & 38526.59 & 0.921 & 0.239 \\ \hline
\end{tabular}
\end{table}

The difference in route cost can be attributed to the algorithms reaching a local minima and passing it as a feasible route given a certain metric. This in turn saves computation time since this metric allows local minima that are close to the global minimum. The cost of the feasible routes are compared to the optimal route by the percent gap, as shown in Figure \ref{gap_comp}. \\

\begin{figure}
\centering
\includegraphics[width=0.47\textwidth]{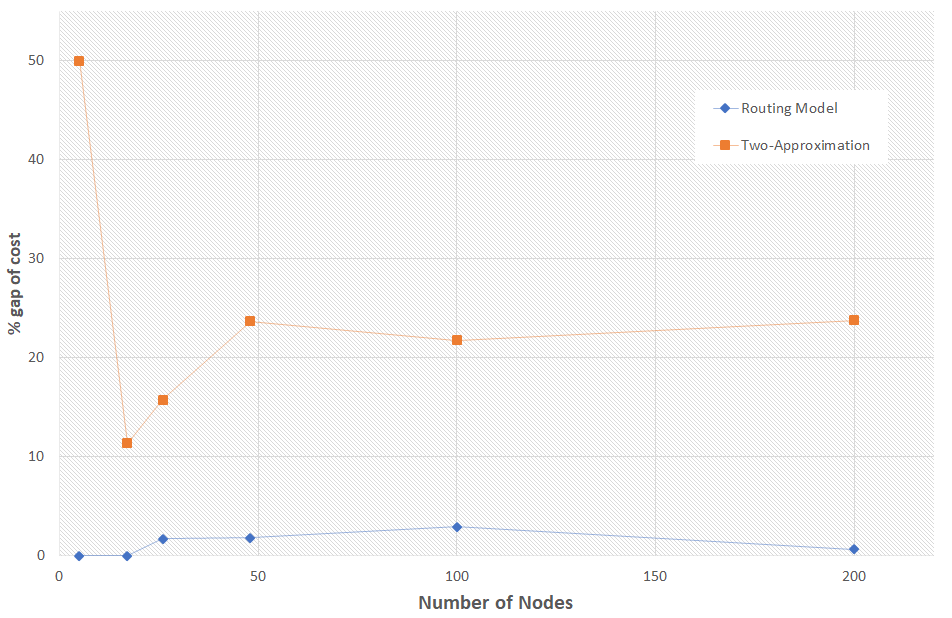}
\caption{Percent gap comparison between Routing Model and Two-approximation algorithms}
\label{gap_comp}
\end{figure}

As shown in Figure \ref{exec_time},  although the Two-approximation algorithm has a fater running time,  a higher percent gap can be observed. This is due to the constraints of this algorithm which returns a feasible route that satisfies the two-approximation condition, such as returning a route with a cost that is double that of the optimal route's cost.

\begin{figure}
\centering
\includegraphics[width=0.47\textwidth]{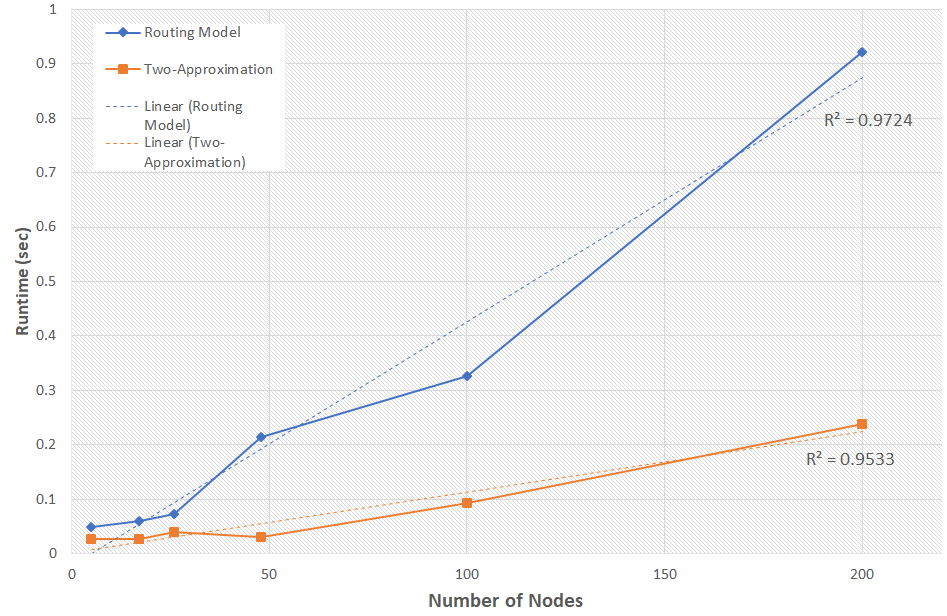}
\caption{Running time comparison between Routing Model and Two-approximation algorithms}
\label{exec_time}
\end{figure}

\section{Conclusions and Future Work} \label{conclusions}

In conclusion, we provided a design and implementation of a system for creating a logistics plan for relief distribution. The system design is composed of two applications. One of which is a web application where majority of the components are deployed and the second application is a mobile application for the operators on the ground. The web application serves 4 major components of the system which produces a solution for the warehouse selection, resource allocation, delivery assignment, and route optimization. 

We introduce a simpler variant of the facility location problem called relief warehouse selection problem and provided a optimal algorithm in Algorithm 1 that produces an optimal solution.  We also provide an $O(n^2)$ solution to identify the amount of resources to drop for each of the  identified warehouses in Algorithm 2. By definition of the vertex cover,  every evacuation center is incident to at least one  warehouse. If an evacuation center is incident to $k$ warehouses, the number of  supplies for the evacuation center is equally distributed to $k$ warehouses incident to it in the computed minimum spanning tree.  

The delivery of supplies will only focus on selected warehouses and the truck assignments will be based on the natural distribution of  warehouses in the map. K-means clustering can provide warehouse assignment to $k$ available trucks. To provide a route for the truck drivers, we use the routing model implementation in Google OR Tools. 

Here, we listed down proposed additional features to the system. Since some roads may become unusable in the event of a disaster, the algorithm can be made to adapt with actual road conditions as reported by operators on the ground. The reporting mechanism will ensure safe routes through dynamic routing. 

The administrators can be given the capability to monitor the status of the relief operations as they can track the actual location of each truck in the fleet as well as the quantity of the relief goods in the warehouses. This also helps ensure transparency and accountability in the delivery of donations.

The underlying framework of the can also be used for other goods and services distribution such as water supply distribution,  laying of current and communication networks, urban planning, and others  \cite{Boonmee2017}.


\bibliographystyle{acm}
\bibliography{biblio}
\nocite{*}


\end{document}